\documentclass{article}

\usepackage[a4paper, total={6in, 9in}]{geometry}

\usepackage{hyperref}
\usepackage{amssymb}
\usepackage{amsmath}
\usepackage{authblk}
\usepackage{amsthm}
\usepackage{physics}
\usepackage{todonotes}
\usepackage{comment}
\usepackage{tikz}
\usepackage{subcaption}
\usepackage{soul}

\usetikzlibrary{arrows.meta}

\usepackage{csquotes}

\usepackage[
  backend=biber,
  style=phys,
  sorting=none,
  giveninits=true,
  articletitle=true,
  biblabel=brackets,
  pageranges=false,
  chaptertitle=false,
  doi=false,
  url=false,
  isbn=false,
  hyperref=true,
  natbib=true
]{biblatex}

\newcommand{\F}{\mathcal{F}}
\newcommand{\W}{\mathcal{W}}
\newcommand{\intinf}{\int_{-\infty}^\infty}
\newcommand{\intoinf}{\int_0^\infty}
\newcommand{\ii}{\mathrm{i}}
\newcommand{\ee}{\mathrm{e}}
\newcommand{\xx}{\mathsf{x}}

\DeclareMathOperator{\sgn}{\text{sgn}}
\DeclareMathOperator{\sinc}{\text{sinc}}

\DeclareMathOperator{\supp}{\text{supp}}

\newtheorem{theorem}{Theorem}[section]
\newtheorem{proposition}[theorem]{Proposition}
\newtheorem{definition}[theorem]{Definition}

\definecolor{curve}{rgb}{0.541,0.012,0.071}

\numberwithin{equation}{section}
\title{When one sign is not enough:\\ $2+1$ circular motion Unruh effect at low energies}

\author[1]{Leo J. A. Parry\thanks{leo.parry@nottingham.ac.uk}}
\author[2,3]{Christopher J. Fewster\thanks{chris.fewster@york.ac.uk}}
\author[1]{Jorma Louko\thanks{jorma.louko@nottingham.ac.uk}}

\affil[1]{School of Mathematical Sciences,
University of Nottingham, 
\break Nottingham NG7 2RD, UK}
\affil[2]{Department of Mathematics, Ian Wand Building, Deramore Lane, \newline University of York, York YO10 5GH, UK}
\affil[3]{York Centre for Quantum Technologies, University of York, York YO10 5DD, UK}

\date{July 2026}

\begin{document}
\maketitle

\begin{abstract}
We address the circular motion Unruh effect in $2+1$ spacetime dimensions, as probed by a pointlike Unruh-DeWitt detector coupled to a massless scalar field. The effective temperature due to circular acceleration, operationally defined in terms of the detector's excitation and de-excitation probabilities, is known to be much smaller than the linear acceleration Unruh temperature when the detector's energy gap is small and the interaction lasts for a long time.
It was shown by Parry \emph{et al.\ }[Class.\ Quant.\ Grav.\ \textbf{42}, 245012 (2025)] that a temperature of the order of the linear acceleration Unruh temperature can nevertheless be recovered in a simultaneous long-time-small-gap double limit, using suitable classes of detector-field couplings described by asymptotically scaled switching families (ASSFs). The successful constructions presented there required the coupling to change sign.  
Here we prove, within the ASSF framework and under certain technical boundedness and localisation conditions, that sign changes in the detector-field coupling are in fact \emph{necessary\/} for obtaining a nonvanishing limiting effective temperature. 
Our analysis is motivated by current work towards an experimental verification of the circular motion Unruh effect in analogue spacetime experiments. 
\end{abstract}

\section{Introduction}

We are honoured to dedicate this paper to the life and science of Jerzy (Jurek) Lewandowski, who passed away prematurely in October 2024. Jurek's work combined mathematical precision with deep physical insight. 
In this paper, we address the circular motion Unruh effect for a massless scalar field in $2+1$ spacetime dimensions. We will begin by reviewing subtleties arising in the effective temperature at low energies, and the results
obtained in~\cite{Parry:2025wub}, which show how a nonvanishing effective temperature can be obtained in the low energy limit for certain 
asymptotically scaled switching families (ASSFs) that were introduced in~\cite{Parry:2025wub}. Here, the `switching' concerns the coupling between the detector and the quantum field along the detector's trajectory. In~\cite{Parry:2025wub}, we gave specific examples of sign-changing ASSFs that gave a nonvanishing effective temperature at low energies; here, we go further and establish 
a new theorem (Theorem~\ref{prop:necessary_sign_changes}), 
which shows that a sign change in the detector-field coupling is necessary to recover a nonvanishing effective temperature in the low energy limit, within the ASSF framework, 
under certain technical conditions. We believe that Jurek would have appreciated the precision that ASSFs afford to our new theorem. 

The context of and motivation for our work is the Unruh effect in quantum field theory. In its original form \cite{Fulling:1972md,Davies:1974th,Unruh1976,Fulling:2014}, one considers a uniformly linearly accelerated detector coupled to a relativistic quantum field in Minkowski space that is in its vacuum state. The Unruh effect predicts that the detector will thermalise at the Unruh temperature 
\begin{equation} 
\label{eq: Unruh temp}
    T_U = \frac{\hbar a}{2\pi c k_B} \,, 
\end{equation}
where $a$ is the proper acceleration. 
As a temperature of $1\,\mathrm{K}$ requires an acceleration of approximately $2.4 \times 10^{20}\mathrm{ms^{-2}}$ in~\eqref{eq: Unruh temp}, an observational verification of the prediction however lies well beyond current experimental capabilities. 
A comprehensive review of the Unruh effect and its applications is given in~\cite{Crispino:2007eb}.

Better prospects for observing the Unruh effect are provided by analogue spacetime systems~\cite{Unruh1981,Liberati,HeliumUniverse}, which simulate relativistic quantum fields in nonrelativistic hydrodynamical or condensed matter table-top experiments. In such systems, there are two main motivations to consider circular acceleration \cite{BellLeinaas, Bell:1986ir, Unruh:1998gq, Leinaas:1998tu} rather than linear acceleration. 
First, an experiment with circular acceleration can be run within a finite size laboratory for an arbitrarily long time. Second, unlike in a relativistic quantum field theory, a condensed matter system does not naturally incorporate a time dilation between the laboratory and the accelerated system; for circular motion, however, this time dilation factor is constant in time, and can therefore be accounted for easily at the data analysis stage \cite{Retzker:2007vql, Gooding:2020scc, Biermann:2020bjh, Unruh:2022gso, Gooding:2025tfp}. 
Further references on the theory of the circular motion Unruh effect are collected in~\cite{Parry:2025wub}. 

For circular motion, however, the Unruh effect cannot be characterised by the simple temperature formula~\eqref{eq: Unruh temp}, due to the lack of a `rotating vacuum' adapted to the circular motion \cite{Denardo:1978dj, Letaw:1979wy, Davies:1996ks, Earman:2011zz}. An effective temperature parameter can still be defined operationally in terms of the ratio of excitations and de-excitations of a localised quantum system~\cite{Unruh1976,DeWitt1979}, but this effective temperature depends not only on the proper acceleration, but also on all the parameters of the orbit and on the internal energy spacing of the detector. While the effective temperature broadly agrees with the linear acceleration prediction \eqref{eq: Unruh temp} over most of the parameter space \cite{Biermann:2020bjh, Unruh:1998gq, Good_2020, Parry:2024jrm}, a notable exception occurs in $2+1$ spacetime dimensions for a massless scalar field. In this case, the circular motion effective temperature is much smaller than the linear acceleration prediction when the internal energy spacing of the accelerating detector is small and the interaction with the quantum field lasts for a long time~\cite{Biermann:2020bjh}. The mathematical origin of this phenomenon is the weak falloff of the field's Wightman function along the circular worldline~\cite{Parry:2024jrm}. This case is especially important because a massless scalar field in $2+1$ spacetime dimensions is precisely the system that is simulated in the recent analogue spacetime proposals to observe the circular motion Unruh effect in 
Bose-Einstein condensates \cite{Gooding:2020scc,Gooding:2025tfp} and 
superfluid Helium thin-films~\cite{Bunney:2023ude}. 

It was shown in \cite{Parry:2025wub} that, for a massless scalar field in $2+1$ spacetime dimensions and a detector in circular motion, the effective temperature in the limit of small energy spacing and long interaction time can be raised to the order of magnitude of the linear acceleration prediction \eqref{eq: Unruh temp}. This is achieved by taking the small energy spacing and long interaction time limits simultaneously, in a controlled fashion,  \emph{provided\/} the detector-field coupling is allowed to change sign, again in a suitably controlled fashion. 
It was further noted that couplings of non-uniform sign arise naturally in certain entanglement-harvesting protocols in condensed matter systems~\cite{Lindel:2023rfi,Gooding:2023xxl}. These observations increase the parameter range in which an experimental verification of the circular motion Unruh effect can be pursued in condensed matter systems. 

What remained open in~\cite{Parry:2025wub}, however, was whether a sign change in the detector-field coupling is not merely \emph{sufficient}, under the constructions presented there, but also \emph{necessary}, under some reasonably general conditions, in order to raise the circular motion effective temperature to the order of the linear acceleration prediction~\eqref{eq: Unruh temp}. In the present paper, we provide a partial result in this direction: under a certain set of boundedness and localisation conditions on the coupling, we show that 
a sign change in the coupling is indeed necessary. We suspect that these boundedness and localisation conditions are not optimal, and we leave potential refinement of these conditions subject to future work. 

The rest of the paper is as follows. 
Section \ref{sec:field-detector-etc} gives a concise review of the transitions in a pointlike Unruh-DeWitt detector coupled linearly to a real massless scalar field in Minkowski spacetime, treated in first-order perturbation theory. The switch-on and switch-off of the interaction is handled in the asymptotically scaled switching family (ASSF) formalism of~\cite{Parry:2025wub}, and we recall the notion of an effective temperature that is defined in terms of the detector's excitation and de-excitation probabilities. 
Section \ref{sec: CM} specialises to uniform circular motion in $2+1$ dimensions, presenting the Small Frequency Suppression (SFS) criterion that determines when a nonzero effective temperature arises in the double limit of small gap and long time. Section \ref{sec:experimenter-instructions} shows how the SFS criterion can be satisfied by switchings that are not of uniform sign. Section \ref{sec: necessity} presents our new Theorem, showing how the SFS criterion necessarily leads to switchings of non-uniform sign, under certain boundedness and localisation assumptions. Section \ref{sec:conclusions} gives brief concluding remarks. 

We use units in which $\hbar=k_B = c =1$. 
In asymptotic formulae, 
$f(x)=O(g(x))$ denotes that $f(x)/g(x)$ is bounded in the limit of interest, and $f(x) = o(g(x))$ denotes that $f(x)/g(x)\to0$ in the limit of interest. 
The Fourier transform is defined as
\begin{equation}
    \widehat{f}(\omega) = \intinf \dd t \, \ee^{-\ii\omega t} f(t).
    \label{eq:fourier-def}
\end{equation}
The Heaviside theta function $\Theta(x)$ and the signum function $\sgn(x)$ are defined as 
\begin{equation}
    \Theta(x) = 
    \begin{cases}
        1 & \text{ for } x\geq0 \\
        0 & \text{ for } x<0,
    \end{cases}
\end{equation}
\begin{align}
    \sgn(x) = 
    \begin{cases}
        1 & \text{ for } x>0 \\
         0 & \text{ for } x=0\\
        -1 & \text{ for } x<0, 
    \end{cases}
\end{align}
and $\sinc$ denotes the analytic function defined by 
\begin{equation}
    \sinc z=\begin{cases} \frac{\sin z}{z} & z\neq 0\\ 1 & z=0.\end{cases}
\end{equation}

\section{Field, detector, long-time limit, and effective temperature} \label{sec:field-detector-etc}

In this section, we recall the basics of the field-detector model, the long-time limit, and the effective temperature in the framework in which we are working. A~more extended discussion is given in Section 2 of~\cite{Parry:2025wub}. 

\subsection{Field-detector model}
Our detector is a pointlike two-level Unruh-DeWitt detector~\cite{Unruh1976, DeWitt1979}, following a worldline $\xx(\tau)$ parametrised by proper time $\tau$ in Minkowski spacetime, which to begin with we assume to have dimension $d\geq 3$. The detector Hilbert space is $\mathcal{H}_D \cong \mathbb{C}^2$, with orthonormal energy eigenbasis $\{\ket{0},\ket{E}\}$, such that $H_D\ket{0} = 0$ and $H_D\ket{E}= E\ket{E}$, where $E\ne0$ is the detector's energy gap. When $E>0$, $\ket{0}$ is the ground state and $\ket{E}$ is the excited state; when  $E<0$, the roles of the states are reversed. 

The detector interacts with a real massless scalar field~$\phi$. The combined system has Hilbert space $\mathcal{H}_D\otimes \mathcal{H}_\phi$, where $\mathcal{H}_\phi$ is the field's Fock space induced by the Minkowski vacuum. Working in the interaction picture, we take the interaction Hamiltonian to be 
\begin{equation}
    H_\mathrm{int}(\tau) = c \chi(\tau) \mu(\tau) \otimes \phi\!\left(\xx(\tau) \right),
\end{equation}
where $c>0$ is the coupling constant, $\chi(\tau)$ is a real-valued switching function that determines how the interaction is turned on and off, and $\mu(\tau)$ is the detector's monopole moment operator. The differentiability and falloff properties of $\chi$ will have a key role in the long time limits that we shall discuss. 

Before the interaction is turned on, the detector is prepared in the state $\ket{0} \in \mathcal{H}_D$ and the field is prepared in the state $\ket{\Psi} \in \mathcal{H}_\phi$. In first order perturbation theory in the coupling constant $c$, the probability to find the detector in the state $\ket{E}$ after the interaction is turned off, without measuring the field state, is proportional to the response function, given by 
\begin{equation} \label{eq:response}
    \F_\chi(E) = \intinf \dd \tau \intinf \dd \tau' \, \chi(\tau) \chi(\tau') \, \ee^{-\ii E (\tau-\tau')} \W(\tau,\tau'),
\end{equation}
where $\W(\tau,\tau') := \expval{\phi(\xx(\tau))\phi(\xx(\tau'))}{\Psi}$ is the pullback of the field's Wightman function in the state $\ket{\Psi}$ to the detector's worldline \cite{Unruh1976, DeWitt1979, birrell, Wald:1995yp, Junker:2001gx}. The response function is well defined whenever the field's Wightman function is a distribution of Hadamard type \cite{Radzikowski:1996pa,Decanini:2005eg} and $\chi$ is sufficiently differentiable and has sufficient falloff. 
The constant of proportionality is independent of the detector's worldline and~$\ket{\Psi}$, and we suppress it in what follows.

We further assume the system to be stationary, in the sense that the detector's worldline is an integral curve of a Killing vector field that is timelike at least in the neighbourhood of the worldline \cite{Letaw:1980yv, Russo:2009yd, Fewster:2023zrw}, and that the Wightman function is invariant under the Poincar\'e transformations generated by this Killing vector. $\W(\tau,\tau')$ then depends on $\tau$ and $\tau'$ only through their 
difference, and we may write $\W(\tau-\tau'):=\W(\tau,\tau')$. 
$\F_\chi$~\eqref{eq:response} then takes the form \cite{Fewster:2016ewy}
\begin{equation} \label{eq:response convolution}
    \F_\chi(E) = \frac{1}{2\pi}\intinf \dd \omega \, |\widehat{\chi}(\omega)|^2 \widehat{\W}(\omega + E).
\end{equation}
While $\widehat{\W}$ is in general a distribution, we assume from now on that it is a bounded function, and continuous except possibly at zero argument. 
This will be the case for $2+1$ circular motion in Section~\ref{sec: CM}.

\subsection{Long time limit} \label{sec: long time limit}

\subsubsection{Switching function family} \label{subsubsec:wishlist}

To formalise the limit in which the interaction lasts for a long time, we introduce a one-parameter family of switching functions $\chi_\lambda$ with $\lambda >0$, where $\lambda$ determines the interaction duration. As $\lambda$ grows, we wish the interaction to be strongly  supported on an interval of length proportional to~$\lambda$, and the coupling strength to remain approximately constant in some averaged sense. We then expect $\F_{\chi_\lambda}$ \eqref{eq:response convolution} to grow linearly with~$\lambda$,
so that the quantity of interest is the scaled response function 
\begin{equation} \label{eq:Flambda}
    \F_\lambda(E) := \lambda^{-1}\F_{{\chi_{{}_\lambda}}}(E), 
\end{equation}
and we expect $\F_\lambda$ to have a finite limit as $\lambda\to\infty$. There exist families $\chi_\lambda$ for which these expectations are realised, with the outcome that  
\begin{align}
\F_\lambda(E) \xrightarrow[\lambda\to\infty]{} \alpha \widehat{\W}(E)\,,
\label{eq:Flambda-pointwise-limit}
\end{align}
for some positive constant~$\alpha$, where the $\lambda\to\infty$ limit in \eqref{eq:Flambda-pointwise-limit} is understood pointwise in $E$~\cite{Fewster:2016ewy, Parry:2025wub}. This is consistent with the usual informal long interaction duration limit, where one factors out the total interaction duration from the response function and interprets the remaining quantity as the transition rate~\cite{Unruh1976,DeWitt1979,birrell}. 

For the $2+1$ circular motion Unruh effect, however, we need to consider the simultaneous limit of $\lambda\to\infty$ and $E\to0$. To this end, we work with switching function families that provide a precise control of both the switching function and its Fourier transform, as introduced in~\cite{Parry:2025wub}.

\subsubsection{Asymptotically scaled switching family}

A prototype of a switching function family that satisfies the wish list of Section \ref{subsubsec:wishlist} is the adiabatic switching, $\chi_\lambda(\tau) = \chi(\tau/\lambda)$, where the function $\chi$ satisfies suitable regularity and falloff conditions~\cite{Parry:2025wub,Fewster:2016ewy}: in this case, the constant $\alpha$ in \eqref{eq:Flambda-pointwise-limit} is equal to~$\Vert \chi \Vert^2$. We work with the  generalisation introduced in~\cite{Parry:2025wub}, called an asymptotically scaled switching family (ASSF). The generalisation relies on the following proposition, which is Proposition 3.1 in~\cite{Parry:2025wub}, where it is proved.

\begin{proposition}\label{prop:chilambda}
    Let $\chi_\lambda\in C^1(\mathbb{R})$ $(\lambda>0)$ be a family of absolutely integrable switching functions
    with the property that $\lambda^{-1}\widehat{\chi}_\lambda(u/\lambda)$ converges almost everywhere in $u$ as $\lambda\to\infty$. Suppose that
    there exists $\eta\in L^2(\mathbb{R},\dd u/(2\pi))$ with the property that, for all sufficiently large $\lambda>0$, one has $|\widehat{\chi}_\lambda(\omega)|\le \lambda \eta(\lambda\omega)$ for almost all $\omega\in\mathbb{R}$. 
    Then there is $\xi\in L^2(\mathbb{R},\dd t)$ so that 
    \begin{equation}\label{eq:xi}
        \widehat{\xi}(u) = \lim_{\lambda\to\infty} \frac{\widehat{\chi}_\lambda(u/\lambda)}{\lambda}
    \end{equation}
    for almost all $u\in\mathbb{R}$.
    Furthermore, if $V:\mathbb{R}^2\to \mathbb{C}$ is any bounded measurable function that is continuous at 
    $(E_*,0)\in \mathbb{R}^2$, 
    then one has a double limit
\begin{equation}\label{eq:Vlimit}
    \frac{1}{2\pi} \intinf \dd\omega\, \frac{|\widehat{\chi}_\lambda(\omega)|^2}{\lambda} \, V(E,\omega) \xrightarrow[E\to E_*]{\lambda\to\infty} \|\xi\|^2 V(E_*,0).
\end{equation}
In particular, the scaled response function $\F_\lambda$ \eqref{eq:Flambda} obeys
    \begin{equation}
    \label{eq:F-to-xi2What}
       \F_\lambda(E) \xrightarrow[E\to E_*]{\lambda\to\infty}  {\Vert \xi \Vert}^2 \, \widehat{\W}(E_*), 
    \end{equation} 
    for every $E_*\in\mathbb{R}$ at which $\widehat{\W}$ is continuous. This implies that one also has
    \begin{equation}
    \label{eq:F-to-xi2What-fixedE}
    \F_\lambda(E_*)
    \xrightarrow[\lambda\to\infty]{}
    {\Vert \xi \Vert}^2 \, \widehat{\W}(E_*)
    \end{equation}
    for all such~$E_*$.
\end{proposition}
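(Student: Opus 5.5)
The plan is to define $g_\lambda(u) := \lambda^{-1}\widehat{\chi}_\lambda(u/\lambda)$. By hypothesis $g_\lambda$ converges almost everywhere to some measurable $g$, and $|g_\lambda(u)|\le \eta(u)$ almost everywhere for all large $\lambda$.

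\emph{Step 1: construct $\xi$.} Passing to the limit gives $|g|\le\eta$ almost everywhere, so $g\in L^2(\mathbb{R},\dd u/(2\pi))$. By Plancherel there is a unique $\xi\in L^2(\mathbb{R},\dd t)$ with $\widehat{\xi}=g$, where the Fourier transform is the $L^2$ extension of \eqref{eq:fourier-def}. This gives \eqref{eq:xi}, and $\|\xi\|^2 = \frac{1}{2\pi}\intinf \dd u\,|g(u)|^2$.

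\emph{Step 2: the double limit \eqref{eq:Vlimit}.} Substituting $\omega=u/\lambda$ rewrites the left-hand side as
\begin{equation*}
\frac{1}{2\pi}\intinf \dd u\, |g_\lambda(u)|^2\, V(E,u/\lambda).
\end{equation*}
Take any sequence $(\lambda_n,E_n)$ with $\lambda_n\to\infty$ and $E_n\to E_*$. For each fixed $u$ we have $(E_n,u/\lambda_n)\to(E_*,0)$, so continuity of $V$ at $(E_*,0)$ gives $V(E_n,u/\lambda_n)\to V(E_*,0)$. Combined with $g_{\lambda_n}\to g$ almost everywhere, the integrand converges almost everywhere to $|g(u)|^2V(E_*,0)$. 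The integrand is dominated by $\sup|V|\,\eta(u)^2$, which is integrable. Dominated convergence then gives the limit $\|\xi\|^2V(E_*,0)$ along every such sequence, which is exactly the double limit.

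\emph{Step 3: the response function.} The integrand in \eqref{eq:response convolution} involves $\widehat{\W}(\omega+E)$, so apply Step 2 with $V(E,\omega)=\widehat{\W}(\omega+E)$. This $V$ is bounded and measurable, and it is continuous at $(E_*,0)$ whenever $\widehat{\W}$ is continuous at $E_*$. That yields \eqref{eq:F-to-xi2What}. Fixing $E=E_*$ along the sequences gives \eqref{eq:F-to-xi2What-fixedE}.

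I expect no serious obstacle. The one point needing care is that the domination by $\eta$ holds only for sufficiently large $\lambda$ and only almost everywhere. This is harmless for sequences with $\lambda_n\to\infty$, since a countable union of null sets is null. The sequential characterisation of the double limit is then legitimate because $\mathbb{R}^2$ is first countable.
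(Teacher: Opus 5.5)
Your proof is correct and complete. The rescaling $g_\lambda(u)=\lambda^{-1}\widehat{\chi}_\lambda(u/\lambda)$ and the bound $|g|\le\eta$, obtained along a sequence so that only countably many null sets are discarded, give $g\in L^2$. Plancherel then produces $\xi$ with $\|\xi\|^2=\frac{1}{2\pi}\intinf \dd u\,|g(u)|^2$. Dominated convergence along arbitrary sequences $(\lambda_n,E_n)$, with dominating function $\sup|V|\,\eta^2$, gives the double limit, and the choice $V(E,\omega)=\widehat{\W}(\omega+E)$ handles the response function.

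The present paper does not reprove this statement; it quotes it as Proposition 3.1 of the earlier work, so there is no proof here to compare against. Your Plancherel-plus-dominated-convergence argument is the natural route. The only cosmetic point concerns the sequential characterisation of the double limit. It is justified because the limit ``$\lambda\to\infty$, $E\to E_*$'' has the countable base $\{\lambda>n,\ |E-E_*|<1/n\}$, not by first countability of $\mathbb{R}^2$ as such.
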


Given Proposition~\ref{prop:chilambda}, the ASSF is now defined as follows. 

\begin{definition}\label{def:assf}
(Asymptotically Scaled Switching Family.) 
If a switching function family $\chi_\lambda$ satisfies the assumptions of Proposition \ref{prop:chilambda} and the function 
$\xi \in L^2(\mathbb{R},\dd t)$ provided by Proposition \ref{prop:chilambda} is not the zero element of $L^2(\mathbb{R},\dd t)$, we call $\chi_\lambda$ an Asymptotically Scaled Switching Family (ASSF). 
\end{definition}

We remark that ASSF switching functions need not have compact support, nor need they have uniform sign. The latter property will be important in Sections \ref{sec: CM} and~\ref{sec: necessity}. 

\begin{figure}[t]
\centering

\begin{subfigure}[t]{0.5\textwidth}
    \centering
    \resizebox{\linewidth}{!}{%
\begin{tikzpicture}

\tikzset{contour/.style={draw=curve, line width=0.9pt}}

\def\supp{2}
\def\height{1.5}

\draw[-Stealth] (-3,0) -- (3,0) node[below, font =\small]{$\tau$};
\draw[-Stealth] (-2.5,-0.3) -- (-2.5,\height+0.4) node[right, font=\small]{$\chi_\lambda(\tau) $};


\draw[contour, smooth, samples=200, domain=-\supp+0.01:\supp-0.01]
plot (\x,{\height*exp(1 - 1/(1-\x*\x/(\supp*\supp))^2)});

\draw[contour] (-2.5,0) -- (-\supp+0.01,0);
\draw[contour] (\supp-0.01,0) -- (2.5,0);

\draw[<->] (-\supp+0.42,-0.2) -- (\supp-0.42,-0.2) node[below, font=\footnotesize] at (0,-0.2) {$\lambda \tau_0$};
    
\end{tikzpicture}
}
\caption{Adiabatic switching}
\end{subfigure}%
\hfill
\begin{subfigure}[t]{0.5\textwidth}
\centering
\resizebox{\linewidth}{!}{%
\begin{tikzpicture}

\tikzset{contour/.style={draw=curve, line width=0.9pt}}

\def\taus{0.8}
\def\taup{2.8}
\def\height{1.5}
\def\taustart{-2.2}

\draw[-Stealth] (-3,0) -- (3,0) node[below, font =\small]{$\tau$};
\draw[-Stealth] (-2.5,-0.3) -- (-2.5,\height+0.4) node[right, font=\small]{$\chi_\lambda(\tau)$};

\draw[contour, smooth, samples=200, domain=\taustart:{\taustart+\taus}]
plot (\x,{\height*0.5*(1-cos(180*(\x-\taustart)/\taus))});

\draw[contour] ({\taustart+\taus},\height) -- ({\taustart+\taus+\taup},\height);

\draw[contour, smooth, samples=200, domain={\taustart+\taus+\taup}:{\taustart+2*\taus+\taup}]
plot (\x,{\height*0.5*(1+cos(180*(\x-\taustart-\taus-\taup)/\taus))});

\draw[contour] (-2.5,0) -- (\taustart,0);

\draw[contour] ({\taustart+2*\taus+\taup},0) -- (2.7,0);

\draw[dotted, thick] ({\taustart+\taus},0) -- ({\taustart+\taus},\height);

\draw[dotted, thick] ({\taustart+\taus+\taup},0) -- ({\taustart+\taus+\taup},\height);

\draw[<->] ({\taustart+\taus},-0.2) -- ({\taustart+\taus+\taup},-0.2)
node[below, font=\footnotesize] at ({\taustart+\taus+\taup/2},-0.2) {$\lambda \tau_p$};

\draw[<->] (\taustart,-0.2) -- ({\taustart+\taus},-0.2)
node[below, font=\footnotesize] at ({\taustart+\taus/2},-0.2) {$\tau_s$};

\draw[<->] ({\taustart+\taus+\taup},-0.2) -- ({\taustart+2*\taus+\taup},-0.2)
node[below, font=\footnotesize] at ({\taustart+\taus+\taup+\taus/2},-0.2) {$\tau_s$};

\end{tikzpicture}
}
\caption{Plateau switching}    
\end{subfigure}

\caption{Panel (a) shows the graph of an adiabatic switching $\chi_\lambda$~\eqref{eq: adiabatic switching}: as $\lambda$ increases, the whole profile of $\chi_\lambda$ is stretched proportionally to~$\lambda$.
Panel (b) shows the graph of a plateau switching $\chi_\lambda$~\eqref{eq: plateau switching}: as $\lambda$ increases, the constant section of $\chi_\lambda$ is stretched proportionally to~$\lambda$, but the initial and final switch-on and switch-off intervals have fixed duration. }
\label{fig:adiabatic-and-plateau}

\end{figure}

An example of an ASSF is the adiabatic switching, 
\begin{equation} \label{eq: adiabatic switching}
    \chi_\lambda(\tau)=\chi(\tau/\lambda),
\end{equation}
where $\chi\in C_0^1(\mathbb{R})$ is not identically vanishing. This is illustrated in Figure \ref{fig:adiabatic-and-plateau} Panel~(a). 
The conditions of Proposition \ref{prop:chilambda} hold with $\eta = |\widehat{\chi}|$, and the proposition provides $\xi=\chi$~\cite{Parry:2025wub}. Therefore, \eqref{eq:F-to-xi2What-fixedE} gives \eqref{eq:Flambda-pointwise-limit} with $\alpha = \Vert \chi \Vert^2$, 
for every $E\in \mathbb{R}$ at which $\widehat{\W}$ is continuous. 

Another example of an ASSF is the plateau switching, 
\begin{equation} \label{eq: plateau switching}
    \chi_\lambda(\tau) = \int_{-\infty}^\tau \dd \tau' \, \big(\psi(\tau') - \psi(\tau'-\tau_s-\lambda \tau_p) \big),
\end{equation}
where $\tau_s$ and $\tau_p$ are positive constants, and $\psi \in C_0(\mathbb{R})$. $\psi$~is assumed non-negative, not identically vanishing, and with support contained in an interval of duration~$\tau_s$. 
This is illustrated in Figure \ref{fig:adiabatic-and-plateau} Panel~(b). 
The conditions of Proposition \ref{prop:chilambda} hold, and $\xi$ is $\widehat{\psi}(0)$ times the characteristic function of $[0,\tau_p]$, where $\widehat{\psi}(0)>0$ by the assumptions on~$\psi$~\cite{Parry:2025wub}. Therefore, \eqref{eq:F-to-xi2What-fixedE} gives \eqref{eq:Flambda-pointwise-limit} with $\alpha = \tau_p |\widehat{\psi}(0)|^2$, for every $E\in \mathbb{R}$ at which $\widehat{\W}$ is continuous.

\subsection{Effective temperature}
We now describe a notion of an effective temperature that characterises the detector's response in the long time limit, implemented with an ASSF.

In the open quantum systems framework, where the field is modelled as an ambient Markovian environment and the detector-field coupling is so weak that the field state remains effectively unchanged, the detector state approaches at late times the Gibbs-like density matrix \cite{DeBievre:2006pys, Juarez-Aubry:2019gjw}
\begin{equation} \label{eq: Gibbs-like state}
    \rho(E) = \frac{1}{1+\ee^{-E/T(E)}} 
    \begin{pmatrix}
        1 & 0 \\ 0 & \ee^{-E/T(E)}
    \end{pmatrix},
\end{equation}
where
\begin{equation} \label{eq:detailed balance temp}
    T(E) = \frac{E}{\log\!\left(\displaystyle{\frac{\widehat{\W}(-E)}{\widehat{\W}(E)}} \right)}.
\end{equation}
When the field is prepared in the Minkowski vacuum, the only non-inertial stationary trajectories for which $T(E)$ is independent of $E$ are the uniformly linearly accelerated trajectories with proper acceleration $a>0$, 
giving $T(E) = a/(2\pi)$. 
$\rho(E)$~\eqref{eq: Gibbs-like state} is then a genuine thermal Gibbs state at the Unruh temperature~$a/(2\pi)$, 
and $T(E)$ is the corresponding detailed balance temperature~\cite{Einstein, terhaar-book}. For other types of non-inertial stationary motion, $T(E)$ generally depends on~$E$, and $\rho(E)$ \eqref{eq: Gibbs-like state} is not strictly thermal. However, on intervals of $E$ where this $E$-dependence is weak, \eqref{eq: Gibbs-like state} may still be viewed as approximately thermal at the gap-dependent effective temperature~$T(E)$. For studies of $T(E)$ in stationary detector settings, see \cite{Biermann:2020bjh, Good_2020, Bunney:2023vyj, Bunney:2023ipk, Parry:2024jrm, Passegger:2025amw}.

In the present setting, where the long time limit is implemented by an ASSF $\chi_\lambda$ as $\lambda \to \infty$, we define the $\lambda$-dependent detailed balance temperature \cite{Parry:2025wub}
\begin{equation} \label{eq: Tlambda(E)}
    T_\lambda(E) = \frac{E}{\log\!\left(\displaystyle{\frac{\F_\lambda(-E)}{\F_\lambda(E)}} \right)},
\end{equation}
whenever the right-hand side is defined, where $\F_\lambda(E)$ is given in \eqref{eq:Flambda}. In the long time limit $\lambda \to \infty$, Proposition \ref{prop:chilambda} gives $\F_\lambda(E) \to \Vert\xi\Vert^2 \widehat{\W}(E)$ for all $E\in \mathbb{R}$ for which $\widehat{\W}$ is continuous. Hence, in this limit, $T_\lambda(E)\to T(E)$, with $T(E)$ as in \eqref{eq:detailed balance temp}. We therefore refer to $T_\lambda(E)$ as the (finite time) detailed balance temperature, and we use it to characterise both the response of the detector with switching function $\chi_\lambda$ and the limit of the response when $\lambda \to \infty$.

\section{Circular motion in \texorpdfstring{$2+1$}{2+1} dimensions} \label{sec: CM}

We now specialise to a detector in uniform circular motion in $(2+1)$ Minkowski spacetime, with the field prepared in the Minkowski vacuum.

\subsection{Response}
Working in a set of standard Minkowski coordinates, 
$\xx = (t,x,y)$ with $\dd s^2 = - \dd t^2 + \dd y^2 + \dd z^2$, 
we may choose a Lorentz frame in which the circular motion worldline reads
\begin{equation} \label{eq: circular worldline}
    \xx(\tau) = \left( \gamma \tau, R \cos\!\left(\frac{\gamma v \tau}{R} \right), R \sin\!\left(\frac{\gamma v \tau}{R} \right)\right),
\end{equation}
where $R>0$ is the orbital radius, $v$ is the orbital speed with $0<v<1$, and $\gamma=(1-v^2)^{-1/2}$ is the Lorentz factor. The proper acceleration is 
\begin{align}
a=\frac{\gamma^2v^2}{R} \,. 
\label{eq:circ-properacceleration}
\end{align}
The worldline \eqref{eq: circular worldline} is stationary,  and as the Minkowski vacuum is Poincar\'e invariant, the response function is given by \eqref{eq:response convolution}, where \cite{Biermann:2020bjh}
\begin{equation} \label{eq: W hat CM}
    \widehat{\W}(E) = \frac{1}{4} - \frac{k E}{2\pi \gamma} \intoinf \dd z \, \frac{\sinc(kEz)}{\sqrt{1-v^2\sinc^2\!z}},
\end{equation}
with $k=2R/(\gamma v)$.

As observed in~\cite{Parry:2025wub}, $\widehat{\W}$ \eqref{eq: W hat CM} satisfies $0\leq \widehat{\W} \leq \frac{1}{2}$, and 
$\widehat{\W}(E)$ is continuous everywhere except at $E=0$, where it has well-defined one-sided limits. The discontinuity at $E=0$ is seen by splitting $\widehat{\W}$ as \cite{Biermann:2020bjh, Parry:2024jrm, Parry:2025wub}
\begin{equation} \label{eq: W hat CM alternative split}
    \widehat{\W}(E) = \frac{1}{4}-\frac{k}{2\pi \gamma} U(E) - \frac{\sgn(E)}{4\gamma},
\end{equation}
where 
\begin{equation}
    U(E) =  E\intoinf \dd z \,\sinc(kE z)\left(\frac{1}{\sqrt{1-v^2 \sinc^2\!z}}-1 \right) ,  
\label{eq:U-def}
\end{equation}
since a dominated convergence argument shows that $U$ \eqref{eq:U-def} is continuous everywhere, including at $E=0$, with $U(E) = O(E)$ as $E\to 0$. 

Inserting \eqref{eq: W hat CM alternative split} into \eqref{eq:response convolution} and \eqref{eq:Flambda} gives 
\begin{equation} \label{eq: Flambda chilambda}
    \F_\lambda(E) = \frac{1}{4\lambda}\Vert \chi_\lambda \Vert^2 - \frac{k E}{4\pi^2 \gamma}\intinf \dd \omega \,  \frac{|\widehat{\chi}_\lambda(\omega)|^2}{\lambda} V(E,\omega) - \frac{\sgn(E)}{8\pi \gamma \lambda} \int_{-|E|}^{|E|} \dd \omega \, |\widehat{\chi}_\lambda(\omega)|^2,
\end{equation}
where 
\begin{align}
    V(E,\omega) &=\frac{U(E+\omega) + U(E-\omega)}{2E}  
= \intoinf \dd z \, \sinc(kEz)\cos(k\omega z) \!\left(\frac{1}{\sqrt{1-v^2 \sinc^2\!z}}-1 \right).
\label{eq:Vfunc-def}
\end{align}
The function $V$ \eqref{eq:Vfunc-def} is continuous and bounded on $\mathbb{R}^2$, and it is even in both arguments separately.
This shows that the second and third terms \eqref{eq: Flambda chilambda} are odd in~$E$, while the first term is independent of $E$. These properties make \eqref{eq: Flambda chilambda} convenient for analysing $\F_\lambda(E)$ in the double limit of $E\to0$ and $\lambda\to\infty$.

\subsection{Small gap and long time limits in succession: vanishing small gap temperature}

We now turn to the effective temperature $T_\lambda(E)$ \eqref{eq: Tlambda(E)} in the limit of small gap and long time, $E\to 0$ and $\lambda\to\infty$. We first consider the limits taken in succession. 

Suppose the long time limit is taken first, and suppose that the ASSF family is such that the outcome is~\eqref{eq:Flambda-pointwise-limit}, with $\widehat{\W}$ given by~\eqref{eq: W hat CM alternative split}: an example of assumptions that guarantee this was given in~\cite{Parry:2025wub}. 
At small gap, we then find \cite{Biermann:2020bjh}
\begin{equation}
    \widehat{\W}(E) = \frac{\gamma - \sgn(E)}{4\gamma} + O(E),
\end{equation}
by which \eqref{eq:detailed balance temp} gives 
\begin{equation}
    T(E) = \frac{|E|}{\log\!\left(\displaystyle{\frac{\gamma+1}{\gamma-1}} \right)} + O(E^2), 
\end{equation}
which vanishes linearly in $E$ as $E\to0$. The small gap effective temperature is therefore significantly lower than expected from the linear acceleration prediction \eqref{eq: Unruh temp} with the circular proper acceleration~\eqref{eq:circ-properacceleration}~\cite{Biermann:2020bjh}. The underlying reason for this discrepancy is the discontinuity of $\widehat{\W}$ at $E=0$, which is characteristic of circular motion in $2+1$ dimensions~\cite{Parry:2024jrm}.  

Conversely, suppose the small gap limit is taken first. 
When the ASSF family is non-negative and satisfies modest technical boundedness assumptions, it was shown in \cite{Parry:2025wub} that the small gap effective temperature tends to zero as $O(\lambda^{-1})$ as $\lambda \to \infty$. Again, this is significantly lower than expected from the linear acceleration prediction \eqref{eq: Unruh temp} with the circular proper acceleration~\eqref{eq:circ-properacceleration}.

\subsection{Simultaneous small gap and long time limit: recovery of positive small gap temperature}

We wish to recover a nonzero effective temperature in the small gap and long time limit. To this end, we now consider the limits not in succession, but simultaneously, in a sense to be established. 

We assume that $\lambda$ and $E$ are related by the positive-valued even function $\lambda(E)$, such that $\lambda(E)\to \infty$ as $E\to 0$. As $\chi_\lambda$ is an ASSF by assumption, $\chi_\lambda$ is absolutely integrable and in $C^1$ by Proposition~\ref{prop:chilambda}. Further, by Proposition \ref{prop:chilambda}, $\chi_\lambda$ is square-integrable for sufficiently large $\lambda$, and $\lambda^{-1}\Vert\chi_\lambda \Vert^2 \to \Vert \xi \Vert^2>0$ as $\lambda \to \infty$.

Consider $\F_\lambda(E)$ \eqref{eq: Flambda chilambda} 
as $\lambda\to\infty$. The first term in \eqref{eq: Flambda chilambda} converges to $\Vert\xi\Vert^2/4$, and the integral in the second term has a non-zero double limit by Proposition~\ref{prop:chilambda}. We therefore have 
\begin{equation}
    \F_{\lambda(E)}(E) = \frac{\Vert \xi \Vert^2}{4}(1+o_e(1)) - \frac{kE}{2\pi \gamma}\Vert \xi \Vert^2 V(0,0) (1+o_e(1)) - \frac{\sgn(E)}{8\pi \gamma \lambda(E)} \int_{-|E|}^{|E|} \dd \omega |\widehat{\chi}_{\lambda(E)}(\omega)|^2 
\label{eq: Flambda chilambda half}
\end{equation}
as $E\to 0$, where the subscript $e$ on the $o_e(1)$ terms indicates that these terms are even in~$E$. 

The third term in \eqref{eq: Flambda chilambda half} depends on the magnitude of $\lambda^{-1}|\widehat{\chi}_{\lambda(E)}(\omega)|^2$ over the interval $|\omega|\leq |E|$ as $E\to 0$, and this is sensitive to the precise relationship~$\lambda(E)$. We now show how a positive temperature can be recovered in the combined long-time-small-gap limit by controlling this third term.

It is clear from \eqref{eq: Tlambda(E)} that a finite positive temperature in the small gap limit 
occurs if and only if the expression 
\begin{equation} \label{eq:positive temp condition}
    \frac{\F_{\lambda(E)}(-E)}{\F_{\lambda(E)}(E)} = 1 +\frac{E}{T_0} + o(E)
\end{equation}
as $E\to 0$, where $T_0$ is a positive constant, and we then have $T_{\lambda(E)}(E) = T_0 + o(1)$ as $E\to 0$. The crucial issue is therefore the balance of the even and odd parts of $\F_{\lambda(E)}(E)$. This balance was analysed in \cite{Parry:2025wub}, with the outcome that 
a positive temperature $T_0$ is attained if and only if
\begin{equation}
    \frac{1}{|E|\lambda(E)}\int_{-|E|}^{|E|}\dd \omega \, |\widehat{\chi}_{\lambda(E)}(\omega)|^2
\label{eq:pre-sfs}
\end{equation}
has a finite, possibly zero, limit as $E\to 0$. A non-zero limit would modify the limiting temperature~$T_0$, so we consider only the case where the limit of \eqref{eq:pre-sfs} vanishes. This condition is 
\begin{equation}\label{eq:SFS}
    \frac{1}{|E|\lambda(E)} \int_{-|E|}^{|E|}\dd \omega \, |\widehat{\chi}_{\lambda(E)}(\omega)|^2 = o(1) 
\qquad \text{(SFS)}
\end{equation}
as $E\to 0$. We refer to \eqref{eq:SFS} as the Small Frequency Suppression (SFS) condition. When SFS holds, one then finds 
\begin{equation} \label{eq:positive temp}
    T_{\lambda(E)}(E) = \frac{a}{2\pi I(v)}\big(1 + o(1) \big),
\end{equation}
as $E\to 0$, where 
\begin{equation}\label{eq:I(v)}
    I(v) := \frac{4v}{\pi^2}V(0,0) =\frac{4v}{\pi^2} \intoinf \dd z \left( \frac{1}{\sqrt{1-v^2\sinc^2\!z}}-1 \right).
\end{equation}

$I(v)$ \eqref{eq:I(v)} vanishes cubically as $v\to0$ and diverges logarithmically as $v\to1$, but it is of order unity for intermediate values of~$v$. 
The small gap temperature \eqref{eq:positive temp} is hence comparable with the linear acceleration prediction, $a/(2\pi)$, over most of the range of~$v$. 

\section{Instructions for an experimenter in the small gap regime} \label{sec:experimenter-instructions}

We have established that a positive small gap temperature is recovered in the simultaneous small gap and long time limit when the SFS condition \eqref{eq:SFS} holds, and this temperature is comparable to the linear acceleration prediction for most of the velocity range. However, in operational terms, what does the SFS condition imply about the ASSF family $\chi_\lambda$ and the relation between $\lambda$ and~$E$?  
How should an experimenter turn the interaction on and off if their experiment is operating in the small gap regime and they wish to probe the effective temperature prediction~\eqref{eq:positive temp}?

\subsection{Adiabatic switching} \label{sec:adiabatic-smallgap}

Consider the adiabatic switching family~\eqref{eq: adiabatic switching}. In this case, the left-hand side of \eqref{eq:SFS} becomes
\begin{equation} \label{eq:adiabatic SFS}
   \frac{1}{|E|\lambda(E)} \int_{-|E|}^{|E|}\dd \omega \, |\widehat{\chi}_{\lambda(E)}(\omega)|^2 = \frac{1}{|E|}\int_{-r(E)}^{r(E)} \dd u\, |\widehat{\chi}(u)|^2,
\end{equation}
where we have changed variables to $u=\lambda(E)\omega$ and defined $r(E):=|E|\lambda(E)$. The small gap behaviour of \eqref{eq:adiabatic SFS} depends on the behaviour of $r(E)$ as $E\to0$. Following the analysis of Appendix A of \cite{Parry:2025wub}, one finds
\begin{equation} \label{eq: adiabatic SFS asymptotic}
    \frac{1}{|E|}\int_{-r(E)}^{r(E)} \dd u\, |\widehat{\chi}(u)|^2 \sim 
    \begin{cases}
        \displaystyle{2\pi \Vert \chi \Vert^2 |E|^{-1}} & \text{ if $r(E)\to \infty$} \\[1ex]
        \displaystyle{|E|^{-1} \int_{-S}^S \dd u \, |\widehat{\chi}(u)|^2} & \text{ if $r(E)\to S>0$} \\[2.6ex]
        \displaystyle{2 |\widehat{\chi}(0)|^2 \lambda(E)} & \text{ if $r(E)\to 0$},
    \end{cases}
\end{equation}
as $E\to 0$, where the last line assumes $\widehat{\chi}(0)\ne0$, and the middle line assumes that $\widehat{\chi}$ is not identically vanishing on the interval $[-S,S]$. 

From \eqref{eq: adiabatic SFS asymptotic}, we see that if $\widehat{\chi}(0)\neq 0$, the SFS condition \eqref{eq:SFS} cannot hold, for any $\lambda(E)$, given our standing assumption that $\lambda(E)\to\infty$ as $E\to0$. 

However, if $\widehat{\chi}(0)=0$, it is seen directly from the rightmost expression in \eqref{eq:adiabatic SFS} that when $r(E)\to 0$ as $E\to0$, SFS is satisfied if 
$\widehat{\chi}(\omega)\to 0$ sufficiently rapidly as $\omega \to 0$. 
For example, if $\lambda(E) \propto |E|^{-\alpha}$ with $0<\alpha<1$ and $\widehat{\chi}(\omega) = O(|\omega|^q)$ as $\omega \to 0$ with $q>0$, then SFS is satisfied when $q>\frac{\alpha}{2(1-\alpha)}$. If $\widehat{\chi}$ vanishes not just at the origin but on some interval around the origin, SFS can even be satisfied for $r(E)$ that is bounded as $E\to0$ but does not necessarily tend to zero. 

For SFS to hold with adiabatic switching, it is hence necessary that $\widehat{\chi}(0)=0$. 
As ${\widehat{\chi}(0) = \intinf \dd \tau \, \chi(\tau)}$, this implies that $\intinf \dd \tau \, \chi_{\lambda}(\tau)=0$ for every $\lambda>0$. The switching function thus cannot have a uniform sign: it must take both positive and negative values and average to zero. Examples of adiabatic switching families that satisfy SFS were given in~\cite{Parry:2025wub}, including compactly-supported profiles, profiles with power-law decay, and profiles with Gaussian decay. 

\subsection{Near-adiabatic switching}

It was shown in \cite{Parry:2025wub} that for every adiabatic switching family satisfying SFS, with suitably strong differentiability and falloff conditions, a compactly-supported asymptotically adiabatic switching family satisfying SFS can be constructed by a suitable cropping of the past and future tails. The cropped switchings do not necessarily satisfy $\widehat{\chi}_\lambda(0)=0$, but they have the property that $\widehat{\chi}_\lambda(0)\to0$ as $\lambda\to\infty$, and they do not have fixed sign.

\subsection{Plateau switching}

Now consider the plateau switching family~\eqref{eq: plateau switching}. It is shown in \cite{Parry:2025wub} that SFS is not satisfied, because of the assumption that $\psi$ is non-negative and not identically zero, which implies $\widehat{\psi}(0)>0$. Relaxing the assumptions to allow $\widehat{\psi}(0)=0$ would mean that $\chi_\lambda$ vanishes on the constant section that is stretched proportionally to~$\lambda$, and this would not qualify as an implementation of an interaction that lasts for a long time.

\section{The necessity to change sign} \label{sec: necessity}

Within adiabatic switching families, we have seen that SFS \eqref{eq:SFS} cannot be satisfied if the functions have a definite sign. The near-adiabatic SFS switching families constructed in \cite{Parry:2025wub} also have the property that the functions do not have a fixed sign. No examples of fixed sign ASSF families satisfying SFS are known to us. We now show in Theorem~\ref{prop:necessary_sign_changes} that fixed sign ASSF families satisfying SFS do not exist under certain boundedness and localisation assumptions. Theorem \ref{prop:necessary_sign_changes} is the substantial new contribution of the present paper. 

\begin{theorem} \label{prop:necessary_sign_changes}
Let the switching function family $\chi_\lambda$ ($\lambda>0$) be an ASSF\null. 
Assume that for sufficiently large $\lambda$ the following three properties hold: 
\begin{itemize}
    \item[(i)] $|\chi_\lambda| \leq M \lambda^{1/2}$, 
    where the constant $M>0$ is independent of~$\lambda$;
    \item[(ii)] 
    $\tau \chi_\lambda(\tau)$ is absolutely integrable, and 
\begin{align}
        \int_{-\infty}^\infty \mathrm{d} \tau \, |\tau| |\chi_\lambda(\tau)| \leq C \lambda \int_{-\infty}^\infty \mathrm{d} \tau 
        \, |\chi_\lambda(\tau)| \,,
\label{eq:localisation-condition}
\end{align}
where the constant $C>0$ is independent of~$\lambda$; 
    \item[(iii)]
    $\chi_{\lambda}\ge0$. 
\end{itemize}
Then the SFS condition \eqref{eq:SFS} does not hold as $E\to 0$, for any scaling $\lambda = \lambda(E)$ for which  $\lambda(E) \to \infty$  and $|E|\lambda(E) \to 0$ as $E\to 0$. 
\end{theorem}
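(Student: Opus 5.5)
The plan is to show that the quantity in \eqref{eq:SFS} is bounded below by a positive constant for all sufficiently small $|E|$. The bound comes from two facts: non-negativity makes $\widehat{\chi}_\lambda(0)$ equal to the $L^1$ norm, and the localisation condition (ii) keeps $\widehat{\chi}_\lambda$ close to that value on the whole interval $|\omega|\le |E|$ once $|E|\lambda(E)$ is small. Condition (i) then converts the known asymptotics of the $L^2$ norm into a lower bound on the $L^1$ norm. Throughout I write $\lambda=\lambda(E)$ and $N_\lambda := \intinf \dd\tau\, |\chi_\lambda(\tau)|$.

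\emph{Step 1 (Fourier transform near zero).} By (iii), $\widehat{\chi}_\lambda(0)=N_\lambda>0$. Using $|\ee^{-\ii\omega\tau}-1|\le|\omega||\tau|$ together with (ii),
\begin{equation*}
    \bigl|\widehat{\chi}_\lambda(\omega)-\widehat{\chi}_\lambda(0)\bigr| \le |\omega|\intinf \dd\tau\,|\tau||\chi_\lambda(\tau)| \le C\lambda|\omega|\, N_\lambda .
\end{equation*}
For $|\omega|\le|E|$ this gives $|\widehat{\chi}_\lambda(\omega)|\ge N_\lambda\bigl(1-C|E|\lambda(E)\bigr)$. Since $|E|\lambda(E)\to0$, we get $|\widehat{\chi}_\lambda(\omega)|\ge N_\lambda/2$ on the whole interval for small $|E|$. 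Integrating over $[-|E|,|E|]$, the left-hand side of \eqref{eq:SFS} is bounded below by $N_\lambda^2/(2\lambda)$.

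\emph{Step 2 (lower bound on $N_\lambda$).} This is the step I expect to be the real content: I need $N_\lambda^2\gtrsim\lambda$, whereas the ASSF structure only controls $L^2$ norms. By (i),
\begin{equation*}
    \Vert\chi_\lambda\Vert^2 \le \sup|\chi_\lambda|\, N_\lambda \le M\lambda^{1/2}N_\lambda .
\end{equation*}
As recalled in Section~\ref{sec: CM}, Proposition~\ref{prop:chilambda} gives $\lambda^{-1}\Vert\chi_\lambda\Vert^2\to\Vert\xi\Vert^2$, and $\Vert\xi\Vert>0$ by Definition~\ref{def:assf}. Hence for large $\lambda$ we have $\Vert\chi_\lambda\Vert^2\ge\lambda\Vert\xi\Vert^2/2$, and therefore $N_\lambda\ge \lambda^{1/2}\Vert\xi\Vert^2/(2M)$.

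\emph{Step 3 (conclusion).} Combining Steps 1 and 2, the SFS expression is at least $\Vert\xi\Vert^4/(8M^2)>0$ for all sufficiently small $|E|$. It therefore cannot be $o(1)$, and SFS fails for every admissible scaling $\lambda(E)$. The only care needed is that all the ``sufficiently large $\lambda$'' hypotheses hold simultaneously, which follows because $\lambda(E)\to\infty$.
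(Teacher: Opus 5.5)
Your proof is correct and follows essentially the same route as the paper. Non-negativity gives $\widehat{\chi}_\lambda(0)=\Vert\chi_\lambda\Vert_1$; condition (i) combined with $\lambda^{-1}\Vert\chi_\lambda\Vert^2\to\Vert\xi\Vert^2>0$ yields $\Vert\chi_\lambda\Vert_1\gtrsim\lambda^{1/2}$; and condition (ii) keeps $|\widehat{\chi}_\lambda(\omega)|\ge\tfrac12\widehat{\chi}_\lambda(0)$ on $|\omega|\le|E|$ once $|E|\lambda(E)\le 1/(2C)$. The only differences are minor: you bound $|\widehat{\chi}_\lambda(\omega)-\widehat{\chi}_\lambda(0)|$ directly via $|\ee^{-\ii\omega\tau}-1|\le|\omega||\tau|$ rather than through the mean value theorem applied to $\widehat{\chi}_\lambda'$, and you make the constant explicit as $\Vert\xi\Vert^4/(8M^2)$.
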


In the assumptions of Theorem~\ref{prop:necessary_sign_changes}, property (i) bounds the growth of the supremum of $|\chi_{\lambda}|$ as $\lambda\to\infty$, by a $\lambda^{1/2}$ power law. Note in particular that any bounded ASSF family satisfies~(i). 

Property (ii) states that the functions $\chi_{\lambda}$ are sufficiently localised. 
For example, property (ii) holds if the support of $\chi_\lambda$ is contained in $[-C\lambda, C\lambda]$, where $C$ is a $\lambda$-independent positive constant; the same $C$ then appears in the bound~\eqref{eq:localisation-condition}. 
Property (ii) holds also for the adiabatic ASSF family, $\chi_\lambda(\tau) = \chi(\tau/\lambda)$, where $\chi \in C^1(\mathbb{R})$ is absolutely integrable, provided $\tau\chi(\tau)$ is absolutely integrable, and this latter property then supplies the requisite sense of localisation. More generally, one can read (ii) as a statement that the spread of the probability distribution $|\chi_\lambda(\tau)|/\|\chi_\lambda(\tau)\|_1$, measured by its first absolute moment, grows no faster than linearly in~$\lambda$.

Property (iii) is the key assumption of uniform sign of the switching function. 

Finally, the scaling $\lambda = \lambda(E)$ is assumed to satisfy $\lambda(E) \to \infty$ and $|E|\lambda(E) \to 0$ as $E\to 0$. The assumption $\lambda(E) \to \infty$ as $E\to 0$ is the long time limit. The assumption $|E|\lambda(E) \to 0$ as $E\to 0$ is introduced because 
it leads to SFS for the adiabatic switchings, as we reviewed in Section~\ref{sec:adiabatic-smallgap}, and also for the near-adiabatic switchings considered in~\cite{Parry:2025wub}. 

We now give the proof of Theorem~\ref{prop:necessary_sign_changes}.  

\begin{proof}

To begin, we note by (iii) that $\widehat{\chi}_\lambda(0) = \Vert \chi_\lambda \Vert_1$, where $\Vert \cdot \Vert_1$ denotes the $L^1(\mathbb{R})$ norm.

We first show that $\widehat{\chi}_\lambda(0)$ grows proportionally to $\lambda^{1/2}$ as $\lambda \to \infty$. 
Setting $V(E, \omega)=1$ in~\eqref{eq:Vlimit}, it follows from Proposition \ref{prop:chilambda} 
that $\Vert \chi_\lambda \Vert^2 \geq A \lambda$ for sufficiently large $\lambda$, for a $\lambda$-independent positive constant~$A$. 
From (i) and~(iii), we have $\Vert \chi_\lambda \Vert^2 \leq M \lambda^{1/2} \Vert \chi_\lambda \Vert_1 = M \lambda^{1/2} \widehat{\chi}_\lambda(0)$. Combining, we find $\widehat{\chi}_\lambda(0) \geq B \lambda^{1/2}$, where $B:=A/M>0$, for sufficiently large~$\lambda$. Note that (iii) was used here in an essential way. 

We next show that $|\widehat{\chi}_\lambda(\omega)|$ grows proportionally to $\lambda^{1/2}$ not just at $\omega=0$ but on an interval of width of order $1/\lambda$ around $\omega=0$. 
From the absolute integrability of $\tau\chi_\lambda(\tau)$ in~(ii), it follows that $\widehat{\chi}_\lambda'$ is continuous. 
By the mean value theorem, we therefore have
\begin{equation}
\left| \widehat{\chi}_\lambda(\omega) - \widehat{\chi}_\lambda(0) \right| \leq |\omega| \sup_{s\in I_\omega}\left|\widehat{\chi}_\lambda'(s) \right| \,,
\end{equation}
where $I_\omega$ stands for $[0,\omega]$ for $\omega>0$ and $[\omega,0]$ for $\omega<0$. 
Now, from (ii) and (iii) it follows that 
\begin{equation}
\left|\widehat{\chi}_\lambda'(\omega) \right| \leq \int_{-\infty}^\infty \mathrm{d} \tau \, |\tau| |\chi_\lambda(\tau)| \leq C\lambda \widehat{\chi}_\lambda(0),
\end{equation}
using $\Vert \chi_\lambda \Vert_1 = \widehat{\chi}_\lambda(0)$. 
Therefore, on the interval $|\omega| \leq 1/(2C\lambda)$, we find $\left| \widehat{\chi}_\lambda(\omega) - \widehat{\chi}_\lambda(0) \right| \leq \frac{1}{2}\widehat{\chi}_\lambda(0)$ and consequently $\left| \widehat{\chi}_\lambda(\omega) \right|\ge \tfrac{1}{2}\widehat{\chi}_\lambda(0)$.
On this interval, we hence have
\begin{align}
\left| \widehat{\chi}_\lambda(\omega) \right| 
\geq \frac{1}{2}\widehat{\chi}_\lambda(0)
\geq \frac{B}{2} \lambda^{1/2} \,,
\end{align}
using the previously established inequality
$\left| \widehat{\chi}_\lambda(0) \right|\geq B \lambda^{1/2}$.

We now combine these observations to show that the SFS condition \eqref{eq:SFS} is not satisfied. 
Recall that $\lambda(E) \to \infty$ and $|E|\lambda(E) \to 0$ as $E\to 0$, by assumption. 
Let $|E|$ be so small that (i), (ii) and (iii) are satisfied with $\lambda = \lambda(E)$. Let further $|E|$ be so small that $|E|\lambda(E)\leq 1/(2C)$, and hence $|E|\leq 1/\bigl(2C\lambda(E)\bigr)$. We then find
\begin{align} \label{eq:SFS_not_satisfied}
\frac{1}{|E|\lambda(E)}\int_{-|E|}^{|E|}\mathrm{d} \omega|\widehat{\chi}_{\lambda(E)}(\omega)|^2 
\geq  \frac{1}{|E|\lambda(E)} \times 2|E| \times \frac{B^2}{4}\lambda(E) 
= \frac{B^2}{2} 
\,, 
\end{align}
using the length $2|E|$ of the integration interval and the bound $\left| \widehat{\chi}_\lambda(\omega) \right|\geq \frac{B}{2}\lambda^{1/2}$ over the integration interval. 
As the rightmost expression in \eqref{eq:SFS_not_satisfied} is a positive constant, \eqref{eq:SFS} is not satisfied. 
\end{proof}

\section{Conclusions} \label{sec:conclusions}

We have reviewed a curiosity of the circular motion Unruh effect for a massless scalar field in $2+1$ dimensions, namely that the effective temperature experienced by a UDW detector vanishes in the limit of long interaction time and small energy gap, when the limits are taken successively in either order~\cite{Parry:2025wub}. This limits the utility of the effective temperature as a quantifier of the circular motion Unruh effect for small detector gap, and the limitation may be especially relevant for recent analogue spacetime proposals to observe the circular motion Unruh effect in 
Bose-Einstein condensates \cite{Gooding:2020scc,Gooding:2025tfp} and 
superfluid Helium thin-films~\cite{Bunney:2023ude}. 

To address this curiosity, both as a theoretical observation and as a practical guide for the design of analogue spacetime experiments, we introduced in \cite{Parry:2025wub} asymptotically scaled switching families (ASSFs) for the UDW detector, and we identified the small frequency suppression (SFS) condition that is necessary and sufficient to obtain a nonzero effective temperature in a simultaneous limit of increasing interaction time and diminishing energy gap. 
In \cite{Parry:2025wub}, we also gave examples of families satisfying the SFS condition, all of which had the feature that the switching function changes sign. Importantly, once the ASSF satisfies SFS, the recovered temperature is independent of the detailed form of the switching family, depending only on the parameters of the circular motion in a manner consistent with the linear acceleration Unruh effect. In this sense, this temperature is a property inherent to the system,
revealed by our construction of suitably scaled switchings. 

As further discussed in~\cite{Parry:2025wub}, switching functions that change sign may be realised in practice, for example in relativistic entanglement harvesting protocols involving electromagnetic fields \cite{Lindel:2023rfi} or Bose-Einstein condensates~\cite{Gooding:2023xxl}. In these protocols, two causally disconnected local laser pulses can become entangled through their separate interactions with the quantum field of interest. The laser pulses play the role of Unruh-DeWitt type detectors, and the switching is determined by the coherent amplitude of each pulse, which need not have a fixed sign. 

In the present paper, as well as reviewing the results of~\cite{Parry:2025wub}, we have shown that a sign change in the switching function is a \emph{necessary\/} condition to obtain a nonzero limiting effective temperature, under certain technical boundedness and localisation conditions. It seems highly likely to us that the necessity of the sign change will hold even under weaker conditions, and we are aware of no examples in which a nonzero limiting effective temperature would be obtained for an ASSF with a uniform sign. Our technical conditions have the merit of making the necessity argument relatively straightforward, and we leave potential refinement of these conditions subject to future work.

\section*{Acknowledgements}
JL thanks the organisers of the Jerzy Lewandowski Memorial Conference, Warsaw, 15–19 September 2025, for the invitation and the hospitality. 
The work of CJF was supported by the Engineering and Physical Sciences Research Council [grant number EP/Y000099/1]. 
The work of JL was supported by United Kingdom Research and Innovation Science and Technology Facilities Council [grant numbers ST/S002227/1, ST/T006900/1 and ST/Y004523/1].
The authors have benefited from the activities of COST Action CA23115:
Relativistic Quantum Information, funded  by COST (European Cooperation in Science and Technology). 
For the purpose of open access, the authors have applied a CC BY public copyright licence to any Author Accepted Manuscript version arising.

\printbibliography

\end{document}